\newif\ifconference
\newif\ifanonymous
\newmdenv[
  topline=false,
  bottomline=false,
  rightline=false,
  skipabove=\topsep,
  skipbelow=\topsep
]{lined}
\renewcommand{\O}{O}
\let\oldtodo\todo
\renewcommand{\todo}{\textcolor{red}{Please use ``backslash your name'' instead of todo}}
\newcommand{\mytodo}[2]{\oldtodo[size=\tiny, color=#1!50!white]{#2}\xspace}
\newcommand{\jakub}[1]{\mytodo{red}{Jakub: #1}}
\newcommand{\vasek}[1]{\mytodo{blue}{Vasek: #1}}
\newcommand{\rh}[1]{\mytodo{brown}{RH: #1}}
\let\citet\textcite
\newtheorem{theorem}{Theorem}
\newtheorem{definition}{Definition}
\renewcommand{\d}{\hat d}
\newcommand{\Astar}{A\kern -.1em*}
  \let\oldlog\log
  \let\oldlg\lg
  \def\log{\texorpdfstring{\log}{log }}
  \def\lg{\texorpdfstring{\lg}{log }}
  \let\log\oldlog
  \let\lg\oldlg
\date{}
\title{Bidirectional Dijkstra's Algorithm is Instance-Optimal}
\def\ourauthors{Anonymous authors}
\def\ourauthors{Bernhard Haeupler, Richard Hladík, Václav Rozhoň, Robert E. Tarjan, Jakub Tětek}
\author{
	\ifanonymous
	Anonymous Authors
	\else
Bernhard Haeupler\thanks{INSAIT, Sofia University ``St.~Kliment Ohridski'' \& ETH Zurich, \texttt{bernhard.haeupler@inf.ethz.ch}}
\and
Richard Hladík\thanks{ETH Zurich \& INSAIT, Sofia University ``St.~Kliment Ohridski'', \texttt{rihl@uralyx.cz}}
\and
Václav Rozhoň\thanks{Charles University, Czech Republic \& INSAIT, Sofia University ``St.~Kliment Ohridski'', \texttt{vaclavrozhon@gmail.com}}
\and
Robert E.~Tarjan\thanks{Princeton University, \texttt{ret@cs.princeton.edu}}
\and
Jakub Tětek\thanks{INSAIT, Sofia University ``St.~Kliment Ohridski'', \texttt{j.tetek@gmail.com}}
\fi
 }
\def\ourabstract{%
Although Dijkstra's algorithm has near-optimal time complexity for the problem of finding a shortest path from a given vertex $s$ to a given vertex $t$, in practice other algorithms are often superior on huge graphs. 
A~prominent example is \emph{bidirectional search}, which concurrently executes Dijkstra's algorithm forward from $s$ and backward from $t$, and stops when these executions meet.
\texorpdfstring{\par\smallskip\par}{}%
In this paper, we give a strong theoretical justification for the use of bidirectional search to find a shortest $st$-path. We prove that for weighted multigraphs, both directed and undirected, a careful implementation of bidirectional search is instance-optimal with respect to the number of edges it examines. That is, we prove that no correct algorithm can outperform our implementation of bidirectional search on \emph{any single instance} by more than a constant factor.
\texorpdfstring{\par\smallskip\par}{}%
For unweighted graphs, we show that bidirectional breadth-first search is instance-optimal up to a factor of $O(\texorpdfstring{\Delta}{Δ})$ where $\texorpdfstring{\Delta}{Δ}$ is the maximum degree of the graph. We also show that this is best possible. 
}
\begin{document}
\maketitle

\begin{abstract}
	\ourabstract
\end{abstract}

\section{Introduction}

From a theoretical perspective, Dijkstra's algorithm, with its near-linear time complexity, is close to optimal for the shortest $st$-path problem, that of finding a shortest path from $s$ to $t$.
When the input graph is huge, however, other algorithms for the problem are often significantly more efficient in practice than Dijkstra's algorithm. One notable algorithm for this problem is \emph{bidirectional search}, proposed by \citet{dantzig1963} in 1963 and \citet{nicholson1966finding} in 1966.  It runs Dijkstra's algorithm forward from $s$ and backward from $t$, and halts when the two executions meet. In the worst case, this algorithm has to examine all the vertices and edges of the input graph, but in practice it often examines only a small part of the graph.

In this paper, we explain why: On weighted multigraphs with positive weights, a version of bidirectional search is \emph{instance-optimal} in terms of the number of edges that the algorithm accesses. 
This means that the algorithm is the most efficient one for \emph{every single instance}.
Concretely, up to a constant factor there is no correct algorithm that accesses fewer edges than bidirectional search on even a single input. This result holds for the adjacency list model of sublinear algorithms, in which we assume that we have only simple query access to the vertices and edges of the input graph and we are not given any additional information about the graph. The time complexity of bidirectional search is proportional to the number of edges accessed, up to a logarithmic factor. Thus bidirectional search is also close to being instance-optimal from the classical time-complexity perspective. We also prove a similar result for the class of unweighted graphs and bidirectional breadth-first search (BFS).

\begin{theorem}[Informal corollary of \cref{thm:main_theorem_weighted,thm:main_unweighted,thm:lower-bound-minima}]
\label{thm:main_informal}
    In the adjacency list model of sublinear algorithms, there is an instance-optimal algorithm for the shortest $st$-path problem on weighted multigraphs.
    On \emph{unweighted} multigraphs, there is an algorithm that is instance-optimal up to a factor of $\Delta(G)$\footnote{We use $\Delta(G)$ or just $\Delta$ to denote the maximum degree of $G$. }; this factor cannot be improved.  
\end{theorem}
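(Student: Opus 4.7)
Since the statement is framed as an informal corollary of three subsequent results, the plan is to explain how those ingredients combine, and to sketch the ideas I would expect to drive each. The weighted upper bound in \cref{thm:main_theorem_weighted} gives an implementation of bidirectional Dijkstra whose edge access count is within a constant factor of the optimum on every instance; the unweighted upper bound in \cref{thm:main_unweighted} gives the analogous guarantee for bidirectional BFS with an extra $\O(\Delta)$ factor; and the lower bound in \cref{thm:lower-bound-minima} rules out better than $\Delta$ slack for unweighted graphs. Assembling the three statements directly yields the informal corollary, so the real work is in the three components.

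For the weighted upper bound, which I expect to be the hardest ingredient, my plan is a certificate/adversary argument. Fix any correct algorithm $A$ and any input graph $G$, and let $S$ be the set of edges that the carefully tuned bidirectional Dijkstra examines on $G$. I would argue that $S$ forms a near-minimal certificate of the shortest $st$-distance: if $A$ misses even a constant fraction of $S$, then the adversary can modify $G$ on the unseen portion, either inserting a genuinely shorter $st$-path through the hidden edges or perturbing weights to make the old witness longer, so as to produce a second graph indistinguishable from $G$ on $A$'s query transcript on which $A$ must answer incorrectly. The subtlety is choosing the stopping criterion and the pruning rule for bidirectional Dijkstra so that $S$ really is such a minimal certificate: naive implementations explore edges whose absence the adversary cannot exploit, and then instance-optimality fails.

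For the unweighted side, bidirectional BFS matches the optimum only up to $\O(\Delta)$ because visiting a vertex $v$ forces BFS to read all $\deg(v)$ incident edges, whereas an instance-optimal algorithm might only need to probe one of them to certify $v$'s BFS level. The lower bound in \cref{thm:lower-bound-minima} should exhibit instance families witnessing that this slack is unavoidable: the natural construction places many high-degree vertices on the explored frontier whose full neighbourhoods must be swept in order to rule out a spurious shortcut to $t$, so any correct algorithm is forced to pay a factor $\Delta$ over the minimum certificate size. Combining the three components yields the theorem verbatim.

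The main obstacle, concentrated in the weighted upper bound, is aligning the definition of ``edges examined'' by bidirectional Dijkstra with the adversary's freedom to hide shorter paths. If the algorithm inspects too many edges, the certificate lower bound cannot account for all of them and the constant-factor guarantee fails; if it inspects too few, it cannot certify the distance and the algorithm is simply incorrect. Resolving this tension, i.e.\ specifying the precise meeting condition, scan order, and pruning rule that make the examined edge set tight against the adversary, is exactly what the ``careful implementation'' of bidirectional search in \cref{thm:main_theorem_weighted} must accomplish.
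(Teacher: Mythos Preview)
Your high-level decomposition is correct: the statement is simply the conjunction of the three cited theorems, so its proof is their proofs. Your sketches of those proofs, however, diverge from the paper's actual arguments in two places that matter.

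For the weighted upper bound, the adversary does create a shorter path, but the mechanism is more structured than ``modify $G$ on the unseen portion.'' The paper exploits the bidirectional structure explicitly: let $E_s$ and $E_t$ be the edge sets explored by the forward and backward executions (the edge-alternation rule forces $|E_s| \approx |E_t|$). If a competing algorithm $A$ queries at most $|E_t|/4$ edges in expectation, then some $e_1 = u_1v_1 \in E_s$ and some $e_2 = u_2v_2 \in E_t$ are each missed with probability at least $1/2$. The adversary \emph{rewires} these two edges into $u_1v_2$ (with arbitrarily small weight $\delta'$) and $u_2v_1$, yielding a multigraph $G'$ with all degrees unchanged and a strictly shorter $st$-path through $u_1v_2$. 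The stopping condition $\d(s,u_s)+\d(u_t,t)\ge\mu$ is what guarantees $d(s,u_1)+d(v_2,t) < d(s,t)$, hence room for the shortcut; strict positivity of weights is what lets $\delta'$ exist. Your single-set certificate $S$ and the ``make the old witness longer'' option do not appear; the argument genuinely needs one missed edge from \emph{each} direction, and it only ever shortens.

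For the unweighted lower bound, your intuition is essentially inverted. The construction is not about forcing an algorithm to sweep neighbourhoods to \emph{rule out} spurious shortcuts; it is about \emph{finding} a single genuine one. The paper glues two $\Delta$-ary trees of depth $D$, rooted at $s$ and $t$, by a single hidden edge whose location is one of $\Theta(\Delta^D)$ leaves. Over a uniformly random placement, any correct algorithm needs $\Omega(\Delta^D)$ expected queries, so by averaging some fixed instance requires $\Omega(\Delta^D)$. But for each fixed instance there is a tailored algorithm (still correct on all inputs) that knows which $\Theta(\Delta^{D-1})$-sized subtree to probe and finishes in $O(\Delta^{D-1})$ queries there. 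The $\Delta$ gap is between ``knowing where to look'' and ``having to search,'' not between certificate size and neighbourhood scans.
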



The proof of \cref{thm:main_informal} is technically straightforward.  But we believe that the result is appealing, for a number of reasons. 
First, we stress that our proved guarantee for bidirectional search is very strong. The result makes it clear that bidirectional search is asymptotically the best possible algorithm for the shortest $st$-path problem on \emph{any} given input graph, provided that we are not given any additional information about the graph. 

Second, there are several standard variants of bidirectional search (see \cref{sec:related_work}). We know of three variants proposed in the literature \cite{dantzig1963,nicholson1966finding,pohl1969bi}, but none is instance-optimal.
Our proof of \cref{thm:main_informal} is for a different but very simple variant of the algorithm.

Third, our result expands the set of problems that are now known to have instance-optimal algorithms. Although there are several areas of computer science for which results of this type are known, it seems that in the setup of \cref{thm:main_informal} -- sublinear graph algorithms -- our result is the first of its kind. We discuss other problems amenable to this type of analysis in \cref{sec:related_work}. 

Finally, the shortest $st$-path problem is interesting from the perspective of determining what conditions are required to achieve instance optimality.
In particular, our results show that certain subtle assumptions about edge weights are crucial. Without these assumptions, instance optimality becomes impossible to attain. This highlights the delicate choices one sometimes has to make to prove instance optimality. We discuss this in \Cref{subsec:remarks}.

\paragraph{Optimality of the unidirectional Dijkstra algorithm in a restricted setting} Additionally, we prove in \Cref{sec:unidirectional_dijkstra} that on directed graphs, assuming we can access only the outgoing edges of vertices and that we cannot access the vertex degrees, the unidirectional Dijkstra algorithm is instance-optimal. The proof is very simple, and it showcases the general approach that we use in our other proofs. For the sake of simplicity of exposition, we only prove this claim for the class of deterministic algorithms.  We prove our other results for randomized  algorithms.

Although the fact that our lower bounds hold for randomized as well as deterministic algorithms may seem uninteresting at first, we think that the opposite is true. 
Although the classical shortest-path algorithms, those of Dijkstra \cite{Dij59} and Bellman-Ford \cite{bellman1958routing,schrijver2005history} are deterministic, many state-of-the algorithms for the shortest path problem in various models of computation are, in fact, randomized. 
This includes the fastest known algorithm for computing distances in undirected graphs~\cite{duan2023randomized} and 
the state-of-the-art parallel algorithms for the shortest path problem~ \cite{cao2020improvedtradoffs,cao2020paralleldirected,cao2023parallel,rozhovn2023parallel}. 
We also note that the sublinear model of computation is notorious for requiring randomization in order to get any non-trivial algorithm for most problems.

\section{Related work}
\label{sec:related_work}
This section presents an overview of related work.
First, we explain the bidirectional search meta-algorithm and its variants proposed in the literature. Next, we discuss the connections with the popular \Astar{} algorithm, which solves the shortest $st$-path problem assuming we have additional information about distances to $t$. Finally, we discuss other setups and problems that admit instance-optimality-based analysis. 

\subsection*{The bidirectional Dijkstra algorithm} This algorithm was first proposed by \citet{dantzig1963} but his description was very vague. The first to precisely specify a correct algorithm was \citet{nicholson1966finding}.
The general algorithm is non-determinstic in that it can choose at each step to do a step of the forward search or a step of the backward search.  We formulate the core structure of the algorithm as a meta-algorithm in \cref{alg:metaalgorithm}.

\begin{algorithm}
\SetAlgoLined
\KwIn{Graph $G(V,E)$, source vertex $s$, target vertex $t$}
\KwIn{Selection function, stopping condition}
Initialize forward search from $s$ and backward search from $t$\;
\While{stopping condition is not satisfied}{
    Use a selection rule to select one of the two directions\; 
    \eIf{direction is forward}{
        Do one edge relaxation of Dijkstra's algorithm from $s$\;
    }{
        Do one edge relaxation of reverse Dijkstra's algorithm from $t$\;
    }
}
Recover the shortest path from the two runs;
\caption{Bidirectional Search Meta-Algorithm}
\label{alg:metaalgorithm}
\end{algorithm}

\paragraph{Selection rule} Several rules have been proposed for selecting between the two searches in \cref{alg:metaalgorithm}.  All of them process one vertex at a time, choosing as the next vertex either the next vertex to be processed by the forward search or that to be processed by the backward search.  Processing a vertex relaxes all its incident relaxable edges in the appropriate direction (outgoing or incoming). \citet{dantzig1963} suggests alternating between processing a vertex in the forward search and processing one in the backward search. 
\citet{nicholson1966finding}, on the other hand, suggests always processing a vertex $v$ with $\min\{\d(s, v), \d(v, t)\}$ minimum, where $\d(x, y)$ is the minimum distance from $x$ to $y$ computed so far. 
Finally, \citet{pohl1969bi} suggests processing a vertex in the execution that has reached but not processed the smaller number of vertices. 
None of these methods achieves the instance-optimality guarantees of \cref{thm:main_informal}. In our instance-optimal \cref{alg:instanceoptimal} we shall use a different yet very simple equal-work rule, that of alternating between relaxing an edge in the forward and backward searches.

\paragraph{Stopping condition}
There are also differences in the stopping condition. \citet{dreyfus1969appraisal} proposes stopping when the two executions meet (for a correct definition of ``meets''). \citet{pohl1969bi} proposes a stopping condition that uses distance bounds computed by the algorithm to determine that a shortest path has been found. We use the latter idea.

There is a seemingly natural stopping condition that is not correct: if we stop when the two sets of reached vertices intersect for the first time, the vertex at which the two executions meet may not lie on the shortest path.
Indeed, this incorrect stopping rule appears in the literature, as was pointed out by \citet{pohl1969bi}. 

\paragraph{Optimality} We are not aware of works that give theoretical guarantees of bidirectional search comparable to \cref{thm:main_informal}. In \cite{pohl1969bi}, Pohl gives a heuristic argument. Namely, he models the input graph in a continuous probabilistic way and argues heuristically that under this model, his version of bidirectional search is the best possible. 
In general, it is known that bidirectional search or algorithms based on it can have sublinear time complexity for certain specific classes of graphs such as hyperbolic random graphs \cite{blasius2022efficient}, power-law graphs \cite{borassi2019kadabra}, Chung-Lu random graphs  \cite{basu2024sublinearalgorithmapproximateshortest}, and expanding graphs \cite{alon2023sublinear,blasius2023deterministic}. 

\subsection*{Relation to the \Astar{} algorithm}
One might wonder how it is that bidirectional Dijkstra is instance-optimal, but in practice it is often outperformed by the famous \Astar{} algorithm. 
The reason is that \Astar{} requires additional advice (heuristic distance estimates) as a part of the input. 
More formally, \Astar{} solves a different problem than bidirectional Dijkstra; specifically, the input consists of both the graph and an estimate for every vertex of its distance to the destination. 
This shows an important nuance of \cref{thm:main_informal}: Bidirectional search is optimal only if we assume that we have no way of learning additional meaningful information about the input graph. 

The \Astar{} algorithm was first suggested by \citet{hart1968}. The algorithm was later shown to be optimal among all unidirectional-Dijkstra-like algorithms by \citet{dechter1985}. 

A bidirectional version of the \Astar{} algorithm was suggested by \citet{holte2017}. Different versions of the bidirectional \Astar{} algorithm have been studied; we refer the interested reader to \cite{holte2017,sturtevant2018,shaham2019}. \citet{eckerle2017} made significant progress in proving that bidirectional \Astar{} is optimal among bidirectional-Dijkstra-like algorithms. \citet{shaham2019} in fact used these results to give algorithms that are optimal among all bidirectional-Dijkstra-like algorithms, if one optimally sets a parameter of that algorithm. One does not know the optimal parameter value in advance however, so this result falls short of giving an algorithm optimal among bidirectional-Dijkstra-like algorithms.


Although our techniques are straightforward, and similar to those in the cited papers, our \cref{thm:main_informal} is significantly stronger than previous results, because we prove optimality among \emph{all} correct algorithms, while the previous work \cite{eckerle2017,shaham2019} restricts itself to certain classes of algorithms. Moreover, we show optimality in terms of the number of edge accesses, not vertex accesses. Since the time complexity is near-linear in the number of edge accesses of the algorithm, this means that our algorithm is also near-instance-optimal in terms of its actual time complexity, and not just in the number of reached vertices. 

\subsection*{Instance optimality for other problems}

Due to its strength, instance optimality is an extremely appealing beyond-worst-case guarantee that an algorithm can have. See the relevant chapter in \citet{roughgarden_2021} for an introduction. There are several known setups in which instance-optimality-based analyses of algorithms work. 

The original paper by \citet{fagin2001first_instance_optimal} that coined the term comes from the area of greedy algorithms for retrieving data from databases. 

Several algorithms with instance-optimality-like guarantees are known for problems related to sorting, including the problem of finding the convex hull \cite{chan2017instance_optimal_hull}, computing set intersections \cite{demaine2000adaptive}, sorting with partial information \cite{supi,van2024simpler}, and sorting vertices of an input graph by their distance from a source vertex (the main problem solved by Dijkstra's algorithm) \cite{dijkstra-universal-optimality}. 

In the area of distributed algorithms, many algorithms have been proved to have a guarantee called universal optimality, which is closely connected to instance optimality \cite{garay1998sublinear,haeupler2022hop,ghaffari2022universally}. This includes algorithms for the approximate shortest path problem \cite{haeupler2021universally,goranci2022universally,rozhon_grunau_haeupler_zuzic_li2022deterministic_sssp}. 

Another area in which instance-optimal algorithms are known is the area of sequential estimation \cite{valiant2016instance,valiant2017automatic,huang2021instance}.\vasek{add the new sampling paper} There is an instance-optimal sublinear-time algorithm for computing the distance of a point from a given curve \cite{baran2004optimal}. Yet another area with known instance-optimal algorithms is that of multi-armed bandits, with works on that topic including \cite{lai1985asymptotically,chen2016open,chen2017towards,li2022instance,kirschner2021asymptotically}.

\section{Preliminaries}
\label{sec:preliminaries}
In this section, we review standard concepts necessary for the later analysis. 
\paragraph{Our model of sublinear algorithms}
We work within a standard query model for sublinear graph algorithms \cite[Chapter 10]{goldreich2017introduction}. The complexity measure of interest is then the number of queries performed.
We assume that the input is a weighted multigraph $G$ with $n$ vertices whose edges have real-valued weights.  We allow both self-loops and parallel edges.

We assume that the graph $G$ is stored in the incidence list format. That is, we have query access to its vertices, each of which keeps a list of its outgoing edges and a list of its incoming edges; if the graph is undirected, these lists are the same.  Each edge on such a list stores its weight and both ends of the edge.  Each edge is on two incidence lists.  
We assume that the vertices are numbered $1$ to $n$ and that vertices are identified by their number.  On an undirected graph, we are allowed to perform the following queries:

\begin{enumerate}
    \item \textbf{Degree}($i$): Given a vertex $i$ with $1 \le i \le n$, this function returns the degree of the vertex. 
    \item \textbf{Edge}($i, j$): Given a vertex $i$ with $1 \le i \le n$ and the index $j$ of its $j^{th}$ incident edge, with $1 \le j \le \deg_G(i)$\footnote{We use $\deg_G(i)$ to denote the degree of vertex $i$ in graph $G$. }, this function returns the $j$-th incident edge of $i$, along with its weight and its other end.
\end{enumerate}

On directed graphs, we can similarly query the \textbf{Indegree} and the \textbf{Outdegree} of a vertex; moreover, we have functions \textbf{Inedge} and \textbf{Outedge} that can access the incident incoming and outgoing edges of a given vertex.

Each query has a unit cost.  We use the number of queries as our main measure of algorithmic complexity. 
This measure of \emph{query complexity} is closely related to the classical time complexity, since classical algorithms based on Dijkstra's algorithm run in time proportional to the number of edges examined, up to a logarithmic factor required for heap maintenance. 


We will prove instance optimality in the more general setting of randomized algorithms. For such algorithms, we count the expected number of queries. A randomized algorithm is said to be correct if it is correct on every instance with a probability at least 0.9. 

\paragraph{The shortest $st$-path problem} In the shortest $st$-path problem, we are given an input (directed or undirected) multigraph $G$. The edges of the graph have weights given by function $\ell : E(G) \rightarrow \mathbb{R}_{>0}$; this function induces a distance function $d: V(G) \times V(G) \rightarrow  \mathbb{R}_{\ge 0}$ that maps any two different nodes $u,v$ to their positive shortest-path distance $d(u,v)$ (that might be different from $d(v,u)$ on directed graphs). It is important to consider edges with strictly positive, instead of nonnegative, weights. We discuss the difference later in \cref{subsec:remarks}. 

For the shortest $st$-path problem, we are moreover given two vertices $s$ and $t$ of $G$. The task is to return an arbitrary shortest $st$-path. Note that in the case of multi-graphs, it does not suffice to just output the sequence of vertices but one must output the specific edges.

\paragraph{Dijkstra's algorithm} We now recall Dijkstra's algorithm \cite{Dij59} and define terms that we will later use in our proofs.
Throughout the execution, we store for each vertex $w$ the length $\d(s, w)$ of the shortest $sw$-path found so far. The algorithm starts with the vertex $s$ being \emph{open} and all others being \emph{unvisited}. It then repeatedly takes the vertex $u$ closest to $s$ among all open vertices and \emph{relaxes} all edges $u v$ leaving this vertex as follows. Relaxing an edge to an unvisited vertex $v$ makes $v$ open and sets $\d(s,v) \coloneq \d(s,u) + \ell(uv)$. 
If $v$ is open, we update the length of the currently shortest path to $v$ by setting $\d(s,v) \coloneq \min(\d(s,v), \d(s,u) + \ell(uv))$. If $v$ is closed, we do nothing.
Once we have relaxed all edges leaving a vertex, the vertex is \emph{closed} from that point on, and we continue with the next vertex in the list of open vertices. It should be noted that the names unlabeled, labeled, and scanned are also used for unvisited, open, and closed, respectively.

The classical Dijkstra's algorithm finishes once the target node $t$ becomes closed, or when the set of open nodes becomes empty (in which case $t$ is unreachable). However, in bidirectional search, the stopping condition is more complex as there are two runs of Dijkstra's algorithm involved. 

Once the algorithm finishes, one can recover the shortest path to any closed vertex $v$. Namely, we start with $v$ and we find a vertex $v'$ such that $\hat{d}(s,v') = \hat{d}(s,v) - \ell(v',v)$. By iteratively finding the preceding vertex like this, we may recover the whole shortest $sv$-path.

\paragraph{Instance optimality} Intuitively speaking, an algorithm is instance-optimal (up to $c$) if it is optimal (up to $c$) on every single instance among the set of correct algorithms.

We say that an \emph{algorithm $A$ is correct} if on any input, it outputs a correct output with probability $\geq 0.9$. Note that by standard probability amplification, the constant $0.9$ is arbitrary and any constant $>1/2$ would work. \vasek{clear for s-t-distance, not so clear for s-t-path?}

\begin{definition} \label{def:instance_optimality}
A correct algorithm $A$ is instance-optimal under complexity function $T$ if there exists $c = \O(1)$ such that on every input~$x$ it holds for every correct algorithm $A'$ that the expected complexity $T_A(x)$ and $T_{A'}(x)$ of respectively $A$ and $A'$ on $x$ satisfy
\[
T_{A}(x) \leq c \cdot T_{A'}(x) \,.  
\]
\end{definition}

The definition of instance optimality readily generalizes to algorithms that are instance-optimal up to a potentially nonconstant factor, such as $\Delta(G)$ that we encountered in \cref{thm:main_informal}. 

Throughout the paper, we will assume $T$ to be the query complexity.
However, note that the time complexity can be off only by a factor of at most $O(\log n)$ required to handle the heap operations necessary in Dijkstra's algorithm. 

\section{Warm-up: Unidirectional Search in Directed Graphs} \label{sec:unidirectional_dijkstra}
In this section, we show the simplest result of the kind that we focus on. Namely, we consider a more restricted query model, deterministic algorithms, and we focus on only computing the distance from $s$ to $t$, instead of actually finding the path. We then show that standard Dijkstra's algorithm is instance-optimal if stopped at the right time. Unlike in other sections, we also phrase the result here in a self-contained way to make it more accessible to a casual reader.

The proof is conceptually similar to other proofs in this paper, which however need to take care of several additional obstacles. We hope this proof may serve as a warm-up for the other proofs.

\vasek{is it possible this is folklore?}
\begin{theorem} \label{thm:unidirectional_dijkstra}
Let us have a directed weighted graph $G$ with positive weights and assume we are given two vertices $s,t$. Assume the only operation we can do is to take a vertex we have seen and ask for its next out-neighbor (in an adversarial ordering) and the weight of the edge to that vertex.

	Consider executing Dijkstra's algorithm from $s$ and stopping it once we close some vertex $v$ with $\hat{d}(s,v) = \hat{d}(s,t)$ (possibly $v = t$). Then this algorithm correctly computes the $st$-distance. Furthermore, no correct deterministic algorithm $A$ can perform fewer queries on $G$.
\end{theorem}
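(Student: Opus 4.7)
The theorem has two parts, both fairly direct given the standard invariants of Dijkstra's algorithm. For correctness, the plan is to invoke the classical invariant that when $v$ is closed, $\d(s,v) = d(s,v)$. Because Dijkstra with positive weights closes vertices in non-decreasing order of their true distance from $s$, every vertex that is not yet closed---including $t$, unless $t = v$---has true distance at least $d(s,v)$, so $d(s,t) \ge d(s,v)$. At the same time, the estimate $\d(s,t) = \d(s,v) = d(s,v)$ is realized by an already-discovered $st$-path of that length, giving $d(s,t) \le d(s,v)$, so the returned value $\d(s,t)$ equals $d(s,t)$.

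For the lower bound, let $S = \{u \in V(G) : d(s,u) < d(s,t)\}$. Interpreting the stopping rule so that the algorithm halts the moment $v$ is extracted from the priority queue (before relaxing its outgoing edges) and breaking ties in favor of $t$, the described Dijkstra variant issues exactly $\sum_{u \in S} \deg^+(u)$ queries, so it suffices to show that every correct deterministic algorithm $A'$ must issue at least as many. My tool is an indistinguishability argument. Fix any adversary and run $A'$ on $G$; I claim that if at termination $A'$ has \emph{reached} some $u \in S$ (i.e., $u = s$, or $u$ appeared as the other endpoint of an earlier query) but has not queried all of $u$'s outgoing edges, then $A'$ must be incorrect. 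Indeed, let $e = (u,w)$ be an unrevealed out-edge of $u$ and form $G'$ from $G$ by replacing $e$ with an edge $(u,t)$ of weight $d(s,t) - d(s,u) - \varepsilon$ for some $\varepsilon \in (0,\, d(s,t) - d(s,u))$; this weight is positive precisely because $u \in S$. The adversary on $G'$ can reproduce every answer given on $G$ (the modified edge was hidden and stays hidden), so the deterministic $A'$ outputs the same value on $G'$ as on $G$. But $d_{G'}(s,t) \le d(s,u) + (d(s,t) - d(s,u) - \varepsilon) < d(s,t)$, contradicting correctness on $G'$.

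A short induction along a shortest path then promotes ``reached'' to ``all of $S$''. For $u \in S$ and any shortest $su$-path $s = u_0, u_1, \ldots, u_k = u$, every prefix $u_i$ satisfies $d(s,u_i) \le d(s,u) < d(s,t)$ and so lies in $S$; $u_0 = s$ is reached trivially, and if $u_i \in S$ is reached then by the previous paragraph $A'$ has queried all of $u_i$'s out-edges, including the one to $u_{i+1}$, so $u_{i+1}$ is reached too. Thus $A'$ queries every out-edge of every $u \in S$, matching the Dijkstra count. I expect the main subtleties to lie in (i)~pinning down the ``replace an out-edge'' operation cleanly for multigraphs with self-loops and parallel edges, so that $G'$ is a legitimate instance and the adversary on $G'$ is well defined at the level of edge slots (not endpoint pairs); and (ii)~arguing that the stopping step and tie-breaking really shave Dijkstra's count down to $\sum_{u \in S} \deg^+(u)$ rather than overshooting by the out-degree of the final extracted vertex.
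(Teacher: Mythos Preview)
Your proof is correct and follows the same core idea as the paper: locate an unqueried out-edge of some $u$ with $d(s,u) < d(s,t)$ and replace it by a short edge $(u,t)$, forcing the deterministic algorithm to err on the modified instance. Your inductive reachability argument is more careful than necessary---the paper simply notes that if $A$ makes fewer queries than Dijkstra (whose queries are exactly the out-edges of $S$), then by counting some edge in that set is unqueried, and the swap works regardless of whether $A$ ever reached~$u$.
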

\begin{proof}
	First, we argue correctness. By the standard proof of correctness of Dijkstra's algorithm, once we close the vertex $v$, we have $\hat{d}(s,v) = d(s,v)$. At the same time, vertices are closed in order of non-decreasing distance, meaning that $d(s,t) \ge d(s, v) = \hat d(s, v) = \hat d(s, t)$. Moreover, it always holds that $\hat{d}(s,t) \geq d(s,t)$. Thus, we have $\hat{d}(s,t) = d(s,t)$, meaning that the distance is correct.

For the sake of contradiction, let us have an algorithm that performs fewer queries than Dijkstra on $G$. Therefore, there has to be an edge $uv$ for $d(s,u) < d(s,t)$ that $A$ does not query.\jakub{Why?} We define a graph $G'$ where we replace the edge $uv$ by $ut$ with weight $\delta < d(s,t) - d(s,u)$. The distance between $s$ and $t$ in $G'$ is then $d(s,u) + \delta < d(s,t)$. However, the algorithm does not query this edge. Since the rest of the graph is exactly the same, the algorithm thus returns the same answer on both $G$ and $G'$, which implies that the algorithm is not correct.
\end{proof}

\section{Instance Optimality in Weighted Graphs}

Now we give an instantiation of the bidirectional search meta-algorithm that we later prove is instance-optimal. 
As the selection rule that chooses between the two executions, we use the perhaps simplest possible rule in which we alternate one edge relaxation in the forward algorithm with one edge relaxation of the backward algorithm. This way, we make sure that at any point in time, each of the two executions has the same amount of work being invested in it.

As the stopping condition, we use the classical stopping condition of \citet{pohl1969bi} in \cref{line:break_condition}. That is, we are keeping track of the length $\mu$ of the currently shortest found path. Moreover, we are keeping track of the values $d(s, u_s), d(u_t, t)$: distances of the vertices that are currently being explored from $s$ and $t$. Once we know that $d(s,u_s) + d(u_t,t) \ge \mu$, we may terminate our search since all the paths between $s$ and $t$ we have not considered yet have to consist of two disjoint parts of lengths at least $d(s,u_s)$ and at least $d(u_t, t)$.

We present the algorithm formally in \cref{alg:instanceoptimal}. 

\begin{algorithm}
\SetAlgoLined
\SetKwFunction{Fwd}{Forward\_algorithm}
\SetKwFunction{Bwd}{Backward\_algorithm}
\SetKwProg{Fn}{Function}{:}{}

\KwIn{Graph $G(V,E)$, source vertex $s$, target vertex $t$}
$\mu \gets +\infty$ \tcp*{Length of the shortest path that we have found so far}
$e_\text{mid} \leftarrow \bot$ \tcp*{Middle edge of the shortest path that we have found so far}
$u_s \gets s, u_t \gets t$ \tcp*{Vertices currently being explored in the two executions}
\bigskip
Initialize forward search from $s$ on $G$ and backward search from $t$ on $G$ with edges reversed\;
\While{neither of the two executions has terminated}{
    Alternate between relaxing one edge in the Forward and Backward algorithm\;
}
\bigskip

$uv \leftarrow e_\text{mid}$\;
$P \leftarrow$ ``shortest $su$-path according to forward execution'' + $e_\text{mid}$ + ``shortest $vt$-path according to backward execution''\;

\Return $P$\;
\bigskip

\Fn{\Fwd}{
	$\d(s, \cdot) \gets +\infty$; $\d(s, s) \gets 0$\;
Open $s$\;
 \While{an open vertex exists}{
        Let $u$ be the open vertex with the smallest $\d(s, u)$\; 
        Close $u$\;
        Set $u_s \gets u$\;
        \If{$\d(s, u_s) + \d(u_t, t) \ge \mu$\label{line:break_condition}}{
            \textbf{terminate} the whole algorithm\;
        }
        \For{$v$ a forward neighbor of $u$} {
            \If{$v$ is not closed}{
                $\d(s, v) \gets \min(\d(s,v), \d(s,u) + \ell(uv))$\;
            }
            \If{$v$ is closed in the backward execution and $\d(s, u) + \ell(uv) + \d(v, t) < \mu$}{
				$\mu \gets \d(s, u) + \ell(uv) + \d(v, t)$\; 
                \label{line:mu_def}
				$e_\text{mid} \gets uv$
            }                                
        }
    }
}
\Fn{\Bwd}{ 
	Analogous to Forward algorithm with the roles of \emph{forward} and \emph{backward} flipped\;
}

\caption{Instance-Optimal Bidirectional Dijkstra's Algorithm}
\label{alg:instanceoptimal}
\end{algorithm}

\begin{theorem} \label{thm:main_theorem_weighted}
\cref{alg:instanceoptimal} is an instance-optimal algorithm, under query complexity, for the shortest $st$-path problem in both directed and undirected graphs with positive weights.
\end{theorem}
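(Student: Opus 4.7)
My plan proceeds in two parts: correctness of \cref{alg:instanceoptimal}, followed by the lower bound that establishes instance-optimality.

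For correctness, I would invoke the standard analysis of Pohl's stopping criterion. The variable $\mu$ is always the length of some $st$-path already witnessed (a forward prefix via parent pointers, the edge $e_\text{mid}$, and a backward suffix), so $\mu \ge d(s,t)$ at termination. Conversely, let $r_f := \hat d(s, u_s)$ and $r_b := \hat d(u_t, t)$ be the final frontiers, and set $C_s := \{v : d(s,v) \le r_f\}$ and $C_t := \{v : d(v,t) \le r_b\}$. For any $st$-path $P$, either $P$ contains an edge that \cref{alg:instanceoptimal} relaxed while one endpoint was in $C_s$ and the other in $C_t$, in which case the corresponding $\mu$-update ensures $\mu \le \mathrm{len}(P)$, or $P$ visits some vertex outside $C_s \cup C_t$, in which case $\mathrm{len}(P) \ge r_f + r_b \ge \mu$ by the break condition. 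Either way $\mu \le d(s,t)$, so $\mu = d(s,t)$ and the reconstructed path is shortest.

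For optimality, fix a graph $G$, set $P^* = d(s,t)$, and let $A'$ be any correct algorithm. I plan to charge each adjacency-list slot $(v, j)$ read by \cref{alg:instanceoptimal} to a slot $A'$ must read on $G$, up to a constant factor. The alternation rule balances forward and backward edge relaxations within one, so $Q_A = \Theta\bigl(\sum_{v \in C_s}\deg_{\text{out}}(v) + \sum_{v \in C_t}\deg_{\text{in}}(v)\bigr)$. Imitating the strategy of \cref{thm:unidirectional_dijkstra}, for each slot $(v,j)$ read by \cref{alg:instanceoptimal} but not by $A'$ (from either side of the edge), I would construct a modified graph $G'$ that is indistinguishable from $G$ on $A'$'s transcript yet whose shortest $st$-path is strictly shorter than $P^*$. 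This forces $A'$ to output the same (now incorrect) path on $G'$, contradicting its correctness unless $Q_{A'}$ is already a constant fraction of $Q_A$.

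The main obstacle is designing the modification $G \to G'$ within the adjacency-list model. When the edge $e = vw$ at the uncovered slot satisfies $d(s, v) + d(w, t) < P^*$, the cleanest choice is to lower $\ell(e)$ strictly below $P^* - d(s, v) - d(w, t)$, which is invisible to $A'$ because only $e$'s weight changes. The harder case is $d(s, v) + d(w, t) \ge P^*$, which can arise for forward-queried edges whose other endpoint lies far from $t$: here I would redirect $e$ to point directly to $t$ with negligible weight, which forces changes in the adjacency lists of $w$ and $t$; the plan is to route the rewired edge through positions of those lists that $A'$ provably never reads --- feasible precisely because $A'$'s supposed query budget is much smaller than that of \cref{alg:instanceoptimal}. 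To extend to randomized $A'$, I would use a Markov-type argument: if $\mathbb{E}[Q_{A'}] \ll Q_A$, then with constant probability the realized transcript of $A'$ leaves enough slots unread for the adversarial modification to succeed, producing a constant-probability error on $G'$ and contradicting the correctness of $A'$.
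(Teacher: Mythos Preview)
Your correctness sketch is essentially the standard Pohl argument and is fine in spirit; just be careful that with $C_s$ defined via $\le r_f$ you cannot conclude that every vertex of $C_s$ is closed (there may be several open vertices at distance exactly $r_f$). The paper avoids this by working with strict inequalities $d(s,u) < d_s$, $d(v,t) < d_t$, which guarantees both endpoints are closed.

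The instance-optimality plan has a genuine gap in Case~2. In the query model of \cref{sec:preliminaries} the algorithm has access to \textbf{Degree}$(\cdot)$. Redirecting an unread edge $e=vw$ so that it now ends at $t$ lowers $\deg(w)$ and raises $\deg(t)$; a single call \textbf{Degree}$(t)$ (or \textbf{Degree}$(w)$) distinguishes $G$ from $G'$, regardless of which adjacency-list \emph{positions} $A'$ has read. ``Routing through unread positions'' does not help: you cannot insert an edge into $t$'s list without either increasing its length or evicting an existing edge, which again perturbs some vertex's degree. The redirect-to-$t$ trick works in \cref{thm:unidirectional_dijkstra} only because that theorem explicitly assumes a weaker model with no degree queries.

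The paper's proof sidesteps this by never modifying a single edge in isolation. Instead it picks \emph{two} rarely-accessed edges, one $e_1=u_1v_1\in E_s$ from the forward exploration and one $e_2=u_2v_2\in E_t$ from the backward exploration, and replaces the pair $\{u_1v_1,\,u_2v_2\}$ by the pair $\{u_1v_2,\,u_2v_1\}$. This swap keeps every vertex degree unchanged, so $G$ and $G'$ differ only at the four specific adjacency-list slots corresponding to $e_1,e_2$. The crucial accompanying observation is that because $u_1$ was closed forward and $v_2$ was closed backward \emph{before} the stopping condition fired, one has $d(s,u_1)+d(v_2,t)<\mu=d(s,t)$, so the new edge $u_1v_2$ can be given a tiny positive weight and still shortcut the $st$-distance. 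This two-edge degree-preserving swap is exactly the missing idea; your single-edge modifications cannot be made invisible to \textbf{Degree} queries.
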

We prove the theorem next. The proof is conceptually similar to that of \Cref{thm:unidirectional_dijkstra}, but needs to handle some additional issues. 

Before the proof, let us recapitulate what we need to prove. First, we need to prove that the algorithm is correct. Then, by \Cref{def:instance_optimality}, we need to show that for any algorithm $A'$ that is correct with probability $0.9$, any input (directed or undirected) graph $G$, and any two vertices $s,t$, we have that the expected query complexity is $T_{\textrm{Alg\ref{alg:instanceoptimal}}}(G, s, t) = O\left( T_{A'}(G, s, t) \right)$.


Since our \cref{alg:instanceoptimal} is slightly different than the variants of the bidirectional search that appeared before, we also need to verify its correctness. 

\vasek{atRH: add picture}

\begin{proof}
\textbf{Correctness.} By correctness of Dijkstra's algorithm, the calculated shortest-path distance $\d(s, v)$ to any closed vertex $v$ equals the true distance $d(s, v)$. 
Let $d_s,d_t$ be the respective values of $d(s,u_s)$ and $d(u_t,t)$ when we stopped. The value of $\mu$ corresponds to the length of a (not necessarily simple) $st$-path. 
We will show it is in fact a shortest $st$-path.
First, note that we can assume $d_s > 0$, otherwise $\mu$ tracks $\d(s, t)$ and the algorithm terminates when closing a vertex $u$ with $\d(u, t) \ge \d(t, s)$, and correctness follows from the analysis of unidirectional Dijkstra (\cref{thm:unidirectional_dijkstra}). 

Suppose there is a shortest $st$-path with length $\mu' < \mu$. There has to be an edge $uv$ on this path such that $d(s,u) < d_s$ and $d(v,t) < d_t$. To see this, let $u$ be the vertex on the path that maximizes $d(s, u)$ while still satisfying $d(s, u) < d_s$, and let $v$ be the next vertex on the path. By definition, $d_s \le d(s, v)$. Now we can write $\mu - d_t \le d_s \le d(s, v) = \mu' - d(v, t) < \mu - d(v, t)$, where we used the definition of $\mu$, the previous inequality, $\mu'$ being the length of the $s$-$u$-$t$ path and $\mu' < \mu$ respectively. Thus, $d_t > d(v, t)$.

Since $d(s, u) < d_s$ and $d(v, t) < d_t$, the vertex $u$ is closed in the forward execution and $v$ is closed in the backward execution.
Consider the one of $u,v$ that has been closed later. When closing this vertex, we also explored $uv$. 
We updated $\mu$ to $d(s,u) + \ell(uv) + d(v,t) = \mu' < \mu$ on \cref{line:mu_def}, a contradiction with the assumption that we returned a path of length $\mu$. 

\textbf{Instance optimality.}
We seek to prove that no algorithm for the shortest $st$-path problem can use fewer queries than \Cref{alg:instanceoptimal} by more than a constant factor. In the following argument, we argue that no algorithm can be faster for the problem of computing the $s$-$t$ distance (instead of the shortest $st$-path problem). This implies the claim because any algorithm that computes the path can be easily modified to also calculate the distance without increasing its complexity. 

Let $E_s$ and $E_t$ be the sets of the edges that our algorithm explored using the two executions of Dijkstra's algorithm. Note that we have $|E_t| \le |E_s| \le |E_t| + 1$ by the definition of \cref{alg:instanceoptimal}.

First, note that the query complexity of \cref{alg:instanceoptimal} is proportional to $|E_s| + |E_t| \le 2|E_s|$. 
On the other hand, suppose that an algorithm $A$ explores at most $|E_t|/4$ edges in expectation; here, we say that $A$ explores an edge if it queries it via the operation \textbf{Neighbor} described in \cref{sec:preliminaries} from any of its endpoints. We will prove that under this assumption, the algorithm $A$ is incorrect with a probability of at least $1/2$.

	To see this, note that in both $E_s$ and $E_t$, there exists an edge that is accessed with probability at most $1/4$ by $A$: Otherwise,  by the linearity of expectation, the expected number of edges visited by $A$ would be more than $|E_t|/4$ which we assume is not the case. We call these two edges $e_1 = u_1v_1$ and $e_2 = u_2v_2$ (note that it could happen that $e_1 = e_2$). In the undirected case, we without loss of generality assume that $d(s,u_1) \leq d(s,v_1)$ and $d(v_2,t) \leq d(u_2,t)$. 

We next claim that the shortest $st$-path has its length strictly larger than $ d(s,u_1) + d(v_2,t)$.\vasek{for the next version, we need to split the proof into claims this should be a claim its used twice. also if we dont simplify proofs we should have more numbered equations and refer to them when we use them}
	To see this, consider the point in time right before the condition $\d(s, u_s) + \d(u_t, t) \ge \mu$ starts being true. At this point, both $e_1$ and $e_2$ have already been accessed. We claim that $u_1$ is closed in the forward execution. In the directed case, this is easy to see, since any edge $uv$ is only ever accessed after $u$ is closed. In the undirected case, either $v_1$ is not closed and then $u_1$ must be closed by the same argument, or $v_1$ is closed and then $u_1$ must be also closed since we are assuming $d(s, u_1) \le d(s, v_1)$. Thus, we conclude that $u_1$ is closed in the forward execution, and analogously, $v_2$ is closed in the backward execution. Now, since $u_s$ and $u_t$ are the most recently closed vertices in their respective executions, we have $d(s, u_1) \le d(s, u_s)$ and $d(v_2, t) \le d(u_t, t)$. Thus, $d(s, u_1) + d(v_2, t) \le d(s, u_s) + d(u_t, t) = \d(s, u_s) + \d(u_t, t) < \mu$.

Next, let us define $\delta = d(s,t) - d(s,u_1) - d(v_2,t)$; note that $\delta > 0$ by our previous claim. 
We now construct a graph $G'$ with a different distance between $s$ and $t$ than in $G$ with the property that $A$ returns the same answer on both $G$ and $G'$ with large probability. 

We start with the case $e_1 \neq e_2$. In this case, we define $G'$ by replacing $e_1,e_2$ in $G$ by two edges $u_1v_2$ and $u_2v_1$ where the first edge has length $\delta'$ for arbitrary $0 < \delta' < \delta$ and the second edge has arbitrary weight. Note that the degrees of all vertices are the same in the two graphs, and they also have the same number of vertices, although $G'$ may contain parallel edges and self-loops even if $G$ did not. 

Furthermore, note that the only queries that would return different answers on the two graphs are the neighborhood queries that would access either of the edges $e_1,e_2$ on $G$ but they would access the edges $u_1v_2,u_2v_1$ on $G'$. 
However, we assumed that with probability at least $1/2$, $A$ accesses neither $e_1$ nor $e_2$ when run on $G$. 
This also implies that $A$ does not access the edges $u_1v_2, u_2v_1$ on $G'$ with that probability, since until those edges are accessed, the runs of $A$ on $G$ and $G'$ perform the exact same queries (assuming they use the same source of randomness). 

We conclude that with probability at least $1/2$, $A$ returns the same answer on both graphs $G$ and $G'$. 
We will next argue that the correct answers are different on the two graphs. 

To see this, we will verify that $d_{G'}(s,u_1) \leq d_{G}(s,u_1)$ and $d_{G'}(v_2,t) \leq d_{G}(v_2,t)$.\footnote{In fact, it holds that those distances are the same, but we will not need to argue this.} We show the first inequality since the second inequality can be proven in the same way. Specifically, we will prove that no shortest $su_1$-path uses $e_1$ or $e_2$, which implies the desired inequality. We will assume that $G$ is undirected as the directed case is similar and, in fact, easier.\rh{fix properly}

We start with $e_1$. We recall our assumption that $d(s,u_1) \leq d(s,v_1)$. Together with the fact that all edge weights are positive, this implies that no shortest $su_1$-path can use the edge $e_1$ in the direction from $v_1$ to $u_1$. The edge is clearly not used in the opposite direction. 

We continue with $e_2$. If it lay on the shortest path from $s$ to $u_1$, we would have $d(s, v_2) \le d(s, u_1)$. Moreover, there would be a path from $s$ to $t$ of length at most $d(s, v_2) + d(v_2, t) \le d(s, u_1) + d(v_2, t)$. But we have proven that the distance $d(s,t)$ is strictly larger than this quantity. 

We conclude that $d_{G'}(s,u_1) \leq d_{G}(s,u_1)$. 
Finally, we recall the equality
$d_G(s,t) = d_G(s,u_1) + d_G(v_2,t) + \delta$ to conclude that
\[d_{G'}(s,t) \leq d_{G'}(s,u_1) + d_{G'}(v_2,t) + \delta' < d_{G}(s,u_1) + d_{G}(v_2,t) + \delta = d_G(s,t).\]
In particular, $d_G(s,t) \not= d_{G'}(s,t)$ and we conclude that $A$ is incorrect with probability at least $1/2$ on either $G$ or $G'$, a contradiction. 

Finally, it remains to consider the easier case that $e_1 = e_2$. In this case, the only difference between $G$ and $G'$ is that we set the length of $e_1 = e_2$ in $G'$ to be $\delta'$ for arbitrary $0 < \delta' < \delta$. We then have 
\[
d_{G'}(s,t) \leq d(s,u_1) + \delta' + d(v_2,t) < d(s,t) \,.
\]
By an analogous argument as above, it can be argued that $A$ is incorrect with probability at least $1/4$ on either $G$ or $G'$.
\end{proof}

\subsection{Remarks regarding our setup}
\label{subsec:remarks}

We make a few remarks regarding our setup and our proof of instance optimality. 

\paragraph{Flexibility in our algorithm}
First, we note that there is flexibility in \cref{alg:instanceoptimal}: instead of alternating between edges, it suffices to make sure that the total number of edges explored in either execution is the same, up to constant factors. 



\paragraph{Zero-length edges}
Next, we remark that it is crucial that we assume that edge weights are positive reals. In the case when we allow zero-length edges, instance optimality is no longer possible: Just imagine a large graph where all the edges have weight zero. An algorithm taylored to a particular graph $G$ can start with an advice constituting of a path between $s$ and $t$; it suffices to walk along that path and confirm that $d(s,t) = 0$ while algorithms without such advice have to dutifully explore $G$. 

Formally, nonexistence of instance-optimal algorithms in the case when $0$-weight edges are allowed can be proven by considering the graph $G_0$ that consists of two complete binary trees rooted at $s$ and $t$. Moreover, select random $i$ and connect by an edge the $i$-th vertices on the last layer of the two binary trees. One can notice that any algorithm has to make $\Omega(n)$ queries in expectation on a randomly sampled $G_0$, while for any sample, there exists a fixed fast algorithm tailored to it that only makes $O(\log n)$ queries before finishing.\footnote{This construction implicitly assumes that vertices have unique identifiers -- otherwise we cannot construct the suitable advice. However, the need for unique identifiers can be easily removed by adding to each vertex a neighbor (center of a star) with degree equal to the identifier. }

\paragraph{Positive lower bound on edge weights}
It is also crucial that as weights, we allow arbitrarily small positive numbers: In the proof, we define a certain quantity $\delta$ and construct a graph $G'$ with edge weight $0 < \delta' < \delta$. This assumption is again necessary; we will see in \cref{sec:unweighted} that even on unweighted graphs (where there is a minimum edge weight of $1$), instance optimality is achievable only approximately. We note that even in the unweighted setting, we could prove that bidirectional search is instance-optimal, if we changed the task of finding the $st$-distance (or one shortest $st$-path) to the task of finding \emph{all} shortest $st$-paths. 


\paragraph{Accessing directed edges from both endpoints}
In our proof, it is also crucial that each (directed) edge can be accessed not only from the source, but also from its target. 
If it could be accessed only from the source, one could see that the standard Dijkstra's algorithm is instance-optimal (under the assumption of strictly positive real edge-weights). 

\paragraph{Computing $s$-$t$ distance vs computing a shortest $st$-path}
Our proof actually shows a somewhat stronger statement. Namely, it show that no algorithm for the easier problem of computing the $s$-$t$ distance can be faster on any instance than \Cref{alg:instanceoptimal}, which solves the harder problem of computing an actual shortest path.

\section{Approximate Instance Optimality in Unweighted Graphs}
\label{sec:unweighted}

In this section, we prove that bidirectional search is approximately instance-optimal also on unweighted graphs. See \cref{fig:lower_bound} for the intuition behind the $O(\Delta)$ term that we have to lose in the analysis. We will use the same algorithm, \cref{alg:instanceoptimal}, for the proof. We note that in the unweighted case, no heap is needed for the implementation and the query complexity and the time complexity are equal. We also call this adapted algorithm \emph{the bidirectional BFS} algorithm.\footnote{We note that \cref{alg:instanceoptimal} could be further sped up if we also terminate the search once we encounter an edge $uv$ where $u$ was seen in the execution from $s$ and $v$ from $t$. However, this more practical algorithm is not $o(\Delta)$-instance-optimal, so we focus on the conceptually simpler \cref{alg:instanceoptimal}. }

\begin{theorem}
\label{thm:main_unweighted}
The bidirectional BFS algorithm for the shortest $st$-path problem in unweighted graphs is instance-optimal, under both query and time complexity, up to the factor of $O(\Delta)$.
\end{theorem}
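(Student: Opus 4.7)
My plan is to mimic the structure of the proof of \cref{thm:main_theorem_weighted}, adapted to the unit-weight setting. Correctness of bidirectional BFS transfers verbatim, since the weighted correctness proof used only positivity of weights. For the optimality bound, let $V_s, V_t$ and $E_s, E_t$ denote the closed-vertex and accessed-edge sets of the two executions. By alternation, $|E_s| \approx |E_t|$, so the algorithm's query complexity on $G$ is $\Theta(|E_s|+|E_t|)$, and the task reduces to proving that every correct $A$ has $T_A(G) = \Omega((|E_s|+|E_t|)/\Delta)$.

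To lower-bound $T_A(G)$, I would adapt the averaging + 2-edge-swap adversary strategy from the weighted case: find edges $e_1 = u_1 v_1 \in E_s$ and $e_2 = u_2 v_2 \in E_t$ that $A$ queries with low probability, form $G'$ via the degree-preserving swap $e_1, e_2 \mapsto (u_1, v_2), (u_2, v_1)$, and show by a second averaging over the four affected adjacency-list positions that $A$ fails to distinguish $G$ from $G'$ with probability at least $1/2$. Since the correct answers on $G$ and $G'$ then differ, this contradicts the correctness of $A$.

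The new ingredient, and the source of the $O(\Delta)$ slack, is that the shortcut edge now contributes exactly $+1$ to the path length rather than an arbitrarily small $\delta'$, so $d(s, u_1) + d(v_2, t) < \mu$ is no longer enough for strictness: we need $d(s, u_1) + d(v_2, t) \le \mu - 2$. This forces $u_1$ and $v_2$ to sit strictly inside their respective BFS trees (combined slack at least $2$ over the termination inequality), and since in unweighted BFS each layer can grow by a factor of $\Delta$, the strict-interior vertices can be only a $1/\Delta$ fraction of $V_s$. This is exactly the $\Delta$ factor we accept in the statement, and by \cref{thm:lower-bound-minima} it is tight. The degenerate regime where one of $d_s, d_t$ is close to $\mu$ (so the strict interior is essentially empty) is easily handled separately: in that regime alternation already forces the algorithm's total query count to be $O(\Delta^2)$, while any correct algorithm must still pay $\Omega(\Delta)$ to rule out a near-direct shortcut to $t$. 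The main technical obstacle I anticipate is the bookkeeping across the two layers of averaging, so that all four modified adjacency-list positions are queried with low probability by $A$ simultaneously.
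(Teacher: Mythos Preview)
Your plan is essentially the paper's proof: correctness inherited from the weighted case, then a two-edge swap adversary on rarely-queried edges $e_1\in E_s$, $e_2\in E_t$ to force a strictly shorter $st$-path in $G'$. The one substantive difference is how you manufacture the needed slack $d(s,u_1)+d(v_2,t)\le\mu-2$. You keep $E_s,E_t$ as the \emph{actual} explored-edge sets, restrict to a ``strict interior'' one BFS layer back, pay a factor~$\Delta$ for the restriction, and treat the empty-interior case separately. The paper instead pinpoints the step at which $\mu$ first attains its final value $\mu^0$, via vertices $u^0$ (being closed in the forward run), its neighbor $v^0$ (already closed in the backward run), and the vertex $w^0$ that opened $v^0$ in the backward run; setting $d_s=d(s,u^0)$ and $d_t=d(w^0,t)$ gives $\mu^0=d_s+2+d_t$ for free. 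It then \emph{redefines} $E_s$ to be the edges leaving vertices at distance $\le d_s$ from $s$ (similarly $E_t$), so that every choice of $e_1,e_2$ automatically has the required slack, and the $\Delta$ factor appears only at the very end when bounding the algorithm's actual work by $O(\Delta\cdot\min(|E_s|,|E_t|))$ (at most $|E_s|+O(1)$ vertices are ever closed forward, each contributing $\le\Delta$ explored edges). This sidesteps your degenerate-regime case split, whose stated bound of $O(\Delta^2)$ versus $\Omega(\Delta)$ is plausible but would itself need a short argument.

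One minor point: your worry about ``two layers of averaging over the four affected adjacency-list positions'' is unnecessary. Accessing an edge means querying it from either endpoint, so a single averaging over edges (exactly as in the weighted proof) already controls the probability that $A$ touches any of the four modified positions; a union bound over $e_1,e_2$ then suffices.
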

We note that the following proof mostly follows the proof of \cref{thm:main_theorem_weighted}. 
\begin{proof}

We have already shown correctness in \Cref{thm:main_theorem_weighted}. It thus remains to prove instance-optimality. As in the proof of \Cref{thm:main_theorem_weighted}, we prove that no faster algorithm can be correct for the easier problem of computing the distance from $s$ to $t$. As in that proof, this implies the claim.
We note that in the unweighted setting, once a node $u$ is opened with some value $\d(s,u)$, we know that $\d(s,u) = d(s,u)$. We may thus use $\d$ and $d$ values interchangeably. 

Consider the step in \cref{alg:instanceoptimal} in which we redefined the value $\mu$ to its final value (i.e., the length of the shortest path) for the first time. We will use $\mu^0 = d(s,t)$ to denote this final value. Without loss of generality, we assume that this happened during the forward run. 

We use $u^0,v^0,w^0$ to denote the following three nodes. The node $u^0$ is the node that is being explored during the step in the forward run that defined $\mu^0$. The node $v^0$ is its neighbor that was used to define $\mu^0$ as $\mu^0 = \d(s, u^0) + \ell(u^0, v^0) + \d(v^0, t)$. Finally, $w^0$ is the node such that $v^0$ was opened by $w^0$ in the backward run. 

We will use $d_s, d_t$ to denote the distances $d(s, u^0)$ and $d(w^0, t)$. Note that we have 
\[
\mu^0 = d_s + 2 + d_t
\]
where the additive term $+2$ is for the two edges $u^0v^0$ and $v^0w^0$. 

We let $E_s$ be the set of edges that are outgoing from any vertex $u$ with $d(s,u) \leq d_s$ and let $E_t$ be the set of edges ingoing to any $u$ such that $d(u,t) \leq d_t$; in the undirected case, replace ``outgoing/ingoing'' by ``adjacent''. 


For the sake of contradiction, let us assume the existence of an algorithm $A'$ that, in expectation, explores at most $\min(|E_s|,|E_t|)/4$ edges. Then, using the linearity of expectation, we conclude that there are two edges $e_1 = u_1v_1 \in E_s$ and $e_2 = u_2v_2 \in E_t$, such that with probability at least $1/2$, neither edge is queried by $A'$. 
In the undirected case, we assume without loss of generality that $d(s,u_1) \leq d(s,v_1)$ and similarly that $d(v_2,t) \leq d(u_2,t)$. This implies $d(s, u_1) \le d_s $ and $d(v_2, t) \le d_t $.

We observe that $e_1 \neq e_2$. Otherwise, we would have that
\[
d(s,t) \leq d(s,u_1) + 1 + d(v_1,t) \leq d_s + 1 + d_t
\]
which would be in contradiction with $d(s,t) = d_s + 2 + d_t$.

We define a new multigraph $G'$ by replacing the edges $u_1v_1,v_2u_2$ by $u_1u_2$ and $v_1v_2$. Note that $G'$ may contain self-loops and parallel edges even if $G$ did not. 
We observe that the only queries that distinguish $G$ from $G'$ correspond to accessing $u_1v_1$ or $u_2v_2$ in $G$. Thus, $A$ behaves differently on $G$ and $G'$ with probability at most $1/2$. We argue below that $d_G(s,t) \neq d_{G'}(s,t)$; this implies that the success probability of $A$ is at most $1/2$ which in turn proves that any correct algorithm has to have query complexity at least $\Omega(\min(|E_s|, |E_t|))$ on $G$.

We now need to argue that $d_G(s,t) \not= d_{G'}(s,t)$. 
We first claim that $d_{G'}(s,u_1) \leq d_{G}(s,u_1)$ and $d_{G'}(v_2,t) \leq d_{G}(v_2,t)$; we will argue only for the first inequality as the second proof is the same. 
Specifically, we will prove that no shortest $su_1$-path uses $e_1$ and $e_2$ which implies our claim since then any shortest $su_1$-path in $G$ also exists in $G'$. We will assume that $G$ is undirected as the directed case is similar and, in fact, easier.

\rh{fix those two paragraphs}
We start with $e_1$. We recall our assumption that $d(s,u_1) \leq d(s,v_1)$. This implies that any $su_1$-path that uses the edge $e_1$ in the direction from $v_1$ to $u_1$ has length at least $d(s,v_1) + 1 \ge d(s,u_1) + 1$ and is thus not a shortest path. The edge is clearly not used in the opposite direction. 

We continue with $e_2$. If it lay on a shortest path from $s$ to $u_1$, we would have $d(s, v_2) \le d(s, u_1) - 1 \le d_s - 1$. Moreover, we have $d(v_2, t) \le d_t$. This implies $d(s, t) \le d(s, v_2) + d(v_2, t) \le d_s + d_t - 1$ which is in contradiction with the fact that $d(s,t) = d_s + d_t + 2$.

We conclude that $d_{G'}(s,u_1) \leq d_{G}(s,u_1)$ and analogously $d_{G'}(v_2,t) \leq d_{G}(v_2,t)$. We can now use this to compute that
\begin{align*}
  d_{G'}(s,t) 
  &\leq d_{G'}(s,u_1) + d_{G'}(v_2,t) + 1 \\
  &\leq d_{G}(s,u_1) + d_{G}(v_2,t) + 1 \\
&\le d_s + d_t + 1\,.  
\end{align*}
On the other hand, we have by definition that 
\[
d_G(s,t) = \mu = d_s + d_t + 2
\]
and we thus have $d_G(s,t) \not= d_{G'}(s,t)$. 

It remains to argue that the time complexity of the algorithm is at most $O(\Delta \cdot \min(|E_s|, |E_t|))$. 
To see this, we first observe that the forward search in our algorithm closes at most one vertex of distance at least $d_s + 2$ from $s$. At the point in time when such a node $u$ is first closed, we have $\d(s, u_s) \ge d_s + 2$. However, we would also have $\d(u_t, t) \ge d_t$ since the node $w^0$ was closed by the backward algorithm, by definition of $w^0$. Thus, we have $\d(s,u_s) + \d(u_t,t) \ge d_s + d_t + 2 = \mu^0$; this however triggers the condition on \cref{line:break_condition} and the algorithm terminates. We conclude that our algorithm closes at most $|E_s| + 1$ vertices and hence it explores $O(|E_s| \cdot \Delta)$ edges. An analogous argument can be made for the backward search. We conclude that our algorithm makes $O(\min(|E_s|, |E_t|) \cdot \Delta)$ queries, as needed. 
\end{proof}

\subsection{Lower bound} \label{sec:lower_bound}

Next, we prove that the factor of $O(\Delta)$ in \cref{thm:main_unweighted} cannot be improved. We prove this in a slightly more general setup where the set of allowed positive weights, $\mathcal{W}$, is bounded away from zero. 

\begin{theorem}
	\label{thm:lower-bound-minima}
Assume the shortest $st$-path problem when the allowed graph weights come from a set $\mathcal{W}$ with $\nu \coloneq \min(\mathcal W) > 0$ and we restrict the class of input graphs to those of degree at most $\Delta$. 
Then there is no algorithm that is instance optimal, under both query and time complexity, for the problem up to a factor of $o(\Delta)$.\footnote{We do not regard $\Delta$ as a constant in the $o$ notation. }
\end{theorem}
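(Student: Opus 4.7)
The plan is to construct a family $\{G_{i,j}\}_{(i,j)\in[\Delta-1]^2}$ of ``hard'' instances, each admitting a tailored correct algorithm that uses only $O(1)$ queries, while no single correct algorithm can match that bound on all of them simultaneously. Morally, the construction is the path $s$--$u$--$v$--$t$ in which $u$ and $v$ each carry $\Delta-2$ pendant dummies; the identity of the $uv$ edge is then ``hidden'' among the $\Delta-1$ adjacency-list positions of both $u$ and $v$, indexed by $(i,j)$.

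\textbf{The family and the tailored algorithms.} Fix any weight $w\in\mathcal W$ and let $G_{i,j}$ be the graph on $\{s,t,u,v,y_1,\dots,y_{\Delta-2},z_1,\dots,z_{\Delta-2}\}$ with edges $\{su,uv,vt\}\cup\{uy_k\}_k\cup\{vz_k\}_k$ of weight $w$. Place $s$ at position $1$ of $u$'s adjacency list and $v$ at position $j+1$, filling the other positions with the $y_k$; symmetrically, place $t$ at position $1$ of $v$'s list and $u$ at position $i+1$. The maximum degree is $\Delta$ and the unique shortest $st$-path is $s$--$u$--$v$--$t$ of length $3w$. For each $(i,j)$ define $A'_{i,j}$: query $\deg(s)$, $\deg(t)$, the unique neighbors of $s$ and $t$, and position $j+1$ of $u$'s list; if every response agrees with $G_{i,j}$, output $s$--$u$--$v$--$t$, otherwise fall back to \cref{alg:instanceoptimal}. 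Since $\deg(s)=\deg(t)=1$ already excludes a direct $st$-edge and rules out any length-$2$ detour $s$--$u$--$t$ (which would require $u$ to be $t$'s single neighbor, but that neighbor is $v\neq u$), these $O(1)$ queries certify $d(s,t)=3w$, giving $OPT(G_{i,j})=O(1)$.

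\textbf{Indistinguishability and Yao.} For each $(i,j)$ let $\tilde G_{i,j}$ be obtained from $G_{i,j}$ by deleting the $uv$ edge and inserting two fresh pendants $uy^*$ and $vz^*$ at the vacated positions $j+1$ and $i+1$; this preserves degrees but disconnects $s$ from $t$. The only queries whose answers differ between $G_{i,j}$ and $\tilde G_{i,j}$ are $(u,j+1)$ and $(v,i+1)$ (queries launched from the fresh pendants are unreachable without first asking these). By the usual coupling argument, any algorithm that is correct with probability $\ge 0.9$ on both instances must, on $G_{i,j}$, query one of $(u,j+1)$ or $(v,i+1)$ with probability at least $0.8$. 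Applying Yao's principle with the uniform distribution on $(i,j)\in[\Delta-1]^2$, it suffices to lower-bound $\mathbb E_{(i,j)}[T_B(G_{i,j})]$ for every deterministic correct $B$. Since, before a revealing query is asked, the responses seen by $B$ are independent of $(i,j)$, its queries up to that point form a fixed sequence; in this sequence an $(u,k)$-query ``covers'' the $\Delta-1$ pairs $\{(i,k-1)\}_i$ and a $(v,k)$-query covers $\{(k-1,j)\}_j$. Because all $(\Delta-1)^2$ pairs must eventually be covered (otherwise $B$ errs on some $G_{i,j}$) and each query covers at most $\Delta-1$ new pairs, the expected index of the first covering query under uniform $(i,j)$ is $\Omega(\Delta)$. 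Transferring through Yao gives, for every correct randomized $A$, some $(i^*,j^*)$ with $\mathbb E[T_A(G_{i^*,j^*})]=\Omega(\Delta)$, while $OPT(G_{i^*,j^*})=O(1)$; hence $A$ cannot be $o(\Delta)$-instance-optimal.

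\textbf{Main obstacle.} The delicate step is the final covering/Yao argument: one has to verify that before any revealing query is posed, the algorithm's adaptive choices are genuinely oblivious to $(i,j)$, so that the covering argument may treat its sequence of $(u,\cdot)$- and $(v,\cdot)$-queries as a fixed list common to every $G_{i,j}$. This rests on a careful coupling of the runs of $A$ on different $G_{i,j}$, made possible by the fact that the graphs agree pointwise outside the two hidden adjacency-list positions.
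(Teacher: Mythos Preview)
Your construction is markedly simpler than the paper's (a single $s$--$u$--$v$--$t$ path with pendants, versus two deep $\Delta$-ary trees joined by a hidden bridge), and the target ratio $\Omega(\Delta)/O(1)$ is clean. But the argument has a real gap precisely at the step you flag as the main obstacle.

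A minor point first: your tailored algorithm $A'_{i,j}$ is only correct if $w=\nu=\min\mathcal W$, not for arbitrary $w\in\mathcal W$. Otherwise a parallel $uv$ edge of smaller weight (legal in the multigraph setting) would make the returned path non-shortest; the five queries you list do not rule this out.

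The substantive problem is the claim that ``before a revealing query is asked, the responses seen by $B$ are independent of $(i,j)$.'' With the construction as written---$v$ inserted at position $j{+}1$ and the $\Delta-2$ pendants $y_k$ filling the remaining slots---the pendants \emph{shift} as $j$ varies: position $k$ of $u$'s list holds $y_{k-1}$ when $j\ge k$ but $y_{k-2}$ when $j\le k-2$. Hence a non-revealing query $(u,k)$ already tells the algorithm on which side of $k$ the hidden slot lies. If the pendant identifiers are, say, consecutive integers (the natural default), a correct algorithm can binary-search $u$'s list, locate $v$ in $O(\log\Delta)$ queries, and then certify the path exactly as your $A'_{i,j}$ does; that single correct algorithm is $O(\log\Delta)$ on \emph{every} $G_{i,j}$, and the Yao/covering step collapses. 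To make your independence claim true you would need position $k$ of $u$'s list to hold the \emph{same} vertex across all $G_{i,j}$ with $j\ne k-1$, but with only $\Delta-2$ pendants for $\Delta-1$ non-$s$ slots this forces an extra (isolated) vertex per side that you have not accounted for---and that in turn interacts with the already-present mismatch in vertex counts between $G_{i,j}$ and $\tilde G_{i,j}$. The paper sidesteps all of this: across its instances only a single adjacency-list entry changes (the $i$-th leaf edge becomes the bridge) while the two trees are otherwise literally identical, so the needle-in-a-haystack lower bound needs no delicate obliviousness claim.
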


\begin{figure}[t]
  \centering
  \includegraphics[width=0.8\textwidth]{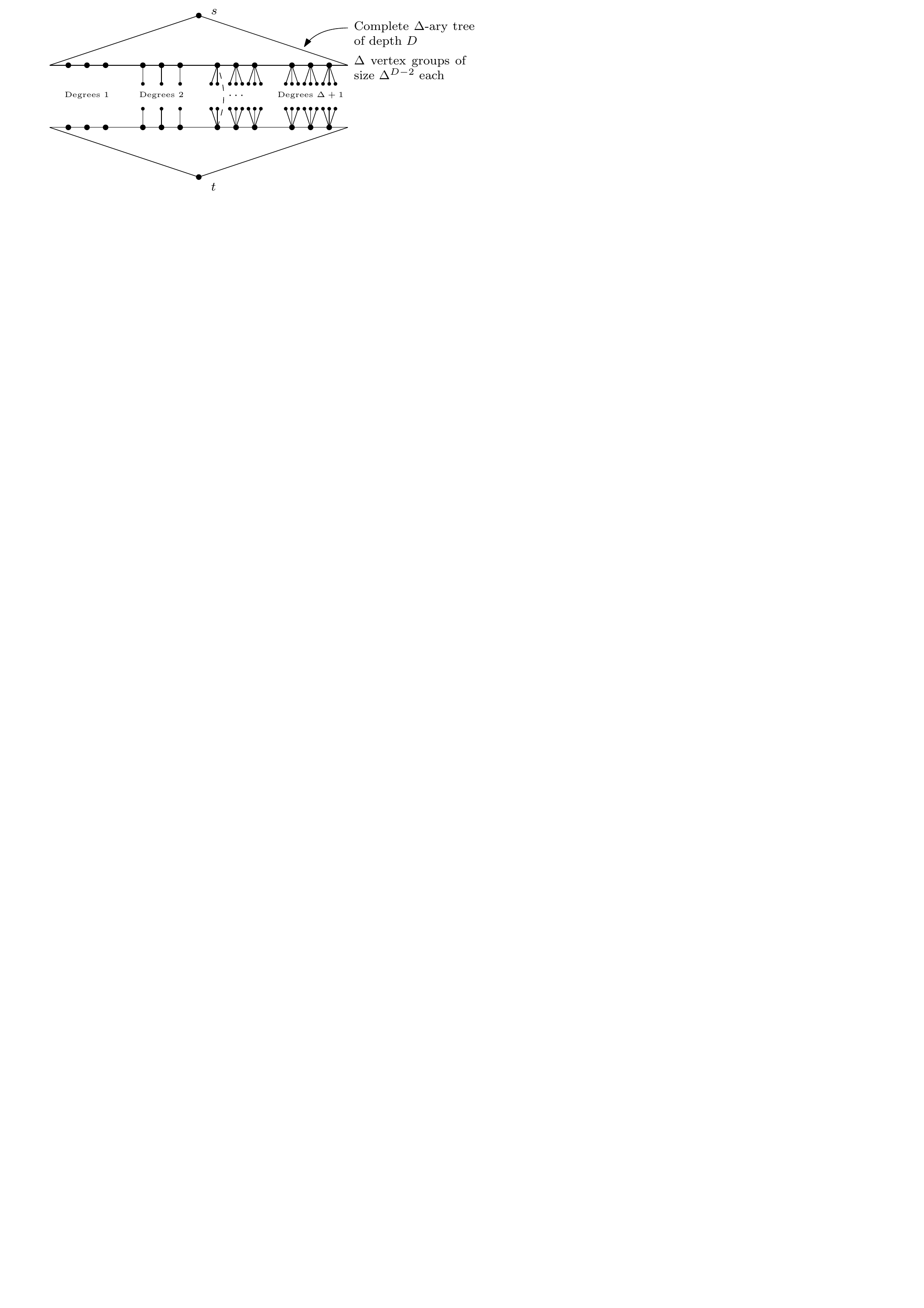}
	\caption{The lower bound instance from \cref{thm:lower-bound-minima}: There are $\Delta$ groups of vertices and one of them contains a hidden edge connecting $s$ to $t$. If we know which group contains the hidden edge, we only need to search one of the groups, otherwise we have to search all of them. 
 Note that this counterexample crucially uses that the hidden edge has the smallest possible weight; otherwise, all the groups need to be searched as they could contain an edge with even smaller weight. }
  \label{fig:lower_bound}
\end{figure}

\begin{proof}
	Consider the following graph $G_1$ illustrated in \Cref{fig:lower_bound}. All edges in $G_1$ have weight $\nu$. To construct $G_1$,  we first construct a tree as follows. Consider a complete $\Delta$-ary tree of depth $D$ rooted in $s$ -- for convenience, we use $\Delta$ to be the arity of the tree, resulting in maximum degree of $\Delta+1$. 
 Remove some of the vertices in the last layer so that the first $\Delta^{D-2}$ vertices on the penultimate layer have degree $\Delta + 1$, the next $\Delta^{D-2}$ vertices have degree $\Delta$, and so on. We then construct a second, isomorphic tree rooted in $t$. Next, for some value $i$, we consider the $i$-th edge in the last layer in both of these trees, we remove the two edges, and connect the corresponding two vertices on the penultimate layer with a new edge.

	If $i$ is chosen uniformly randomly, any algorithm with a constant success probability has to visit, in expectation, a constant fraction of all vertices on the last level before it finds the connecting edge. Hence, its expected complexity is $\Omega(\Delta^D)$. On the other hand, for any fixed graph, we will now describe an algorithm that is correct and runs in time $O(\Delta^{D-1})$. 
 
	Consider an instance of the graph $G_1$ where the connecting edge connects vertices with degree $k$. Next, consider an algorithm that first runs a bidirectional BFS taylored to $G_1$ and $k$ that works as follows. 
 The algorithm explores the vertices from $s$ and $t$ until the penultimate layer. Then, it explores the vertices with degree $k$. If it finds an edge of length $\nu$ between the two components containing $s$ and $t$, the algorithm finishes. 
 If such an edge is not found, or if the algorithm finds that the input graph is not $G_1$, the algorithm disposes of the run and runs any correct algorithm such as Dijkstra's algorithm to compute the result. 

We claim that this algorithm is correct on all inputs. This follows from the fact that once the two subtrees are explored, any shortest path has to have length at least $(2D-1) \cdot \nu$; it is thus not necessary to explore the rest of the graph when a path of this length is found. 

Next, we consider the time complexity of the algorithm on $G_1$. The algorithm explores all vertices except those on the last layer where it only opens some of them. On the last layer, it explores vertices adjacent to vertices with degrees $k$. The number of vertices on all layers except the last is $O(\Delta^{D-1})$. The number of vertices seen on the last layer is at most $\Delta \cdot \Delta^{D-2} = \Delta^{D-1}$. Overall, the complexity is thus $O(\Delta^{D-1})$.
%
\end{proof}

We remark that while bidirectional BFS is not instance-optimal by the above theorem, it is still instance-optimal when compared only to algorithms that do not assume that the graph is unit-weight. Said slightly more formally, bidirectional BFS is instance-optimal against algorithms that are correct on all graphs where the edge weights come from the set $\{1/2, 1\}$ (or indeed $\{\delta, 1\}$ for any $\delta < 1$). This follows from the proof of \cref{thm:main_theorem_weighted}.

\section*{Open problems}
It would be interesting to see whether the \Astar{} algorithm or some of its bidirectional variants \cite{holte2017} allow similarly strong guarantees as bidirectional search. 

Our proof of \cref{thm:main_informal} works for multigraphs that allow parallel edges and self-loops. While we believe that the possibility of self-loops can be avoided, we are not sure whether the same holds for parallel edges: We believe that whether \cref{thm:main_informal} holds in the setting of simple graphs is an interesting open question. 

\section*{Acknowledgments}
BH, RH, VR, and JT were partially funded by the Ministry of Education and Science of Bulgaria's support for INSAIT as part of the Bulgarian National Roadmap for Research Infrastructure. 
BH and RH were partially funded through the European Research Council (ERC) under the European Union's Horizon 2020 research and innovation program (ERC grant agreement 949272).
VR was partially funded through the European Research Council (ERC) under the European Union's Horizon 2020 research and innovation program (ERC grant agreement 853109)
RH and JT were supported by the VILLUM Foundation grant 54451. 
Part of this work was done while JT was working at and RH was visiting BARC at the University of Copenhagen. RH would like to thank Rasmus Pagh for hosting him there.
RT's research at Princeton was partially supported by a gift from Microsoft.  Part of this work was done during RT's visits to INSAIT and to the Simons Institute for the Theory of Computing.



\printbibliography
\end{document}